  \theoremstyle{definition}
  \newtheorem{defn}{Definition}
  \theoremstyle{plain}
  \newtheorem{lem}{Lemma}
  \theoremstyle{plain}
  \newtheorem{cor}{Corollary}
\theoremstyle{plain}
\newtheorem{thm}{Theorem}
\begin{document}

\title{Stochastic Ordering based Carrier-to-Interference Ratio Analysis
for the Shotgun Cellular Systems}

\author{Prasanna Madhusudhanan, Juan G. Restrepo, Youjian (Eugene) Liu, Timothy X Brown, Kenneth R. Baker \thanks{P. Madhusudhanan, Y. Liu, and T. X. Brown are with the Department of Electrical, Computer and Energy Engineering Department; J. G. Restrepo is with the Department of Applied Mathematics; and T. X. Brown and K. R. Baker are with the Interdisciplinary Telecommunications Program, at the University of Colorado, Boulder, CO 80309-0425 USA. Email: \{mprasanna, juanga, eugeneliu, timxb, kenneth.baker\}@colorado.edu}\vspace{-0.4in}
}
\maketitle
\begin{abstract}
A simple analytical tool based on stochastic ordering is developed
to compare the distributions of carrier-to-interference ratio at the
mobile station of two  cellular systems where the base stations are
distributed randomly according to certain non-homogeneous Poisson
point processes. The comparison is conveniently done by studying only
the base station densities without having to solve for the distributions
of the carrier-to-interference ratio, that are often hard to obtain.

\emph{Index Terms:} Carrier-to-interference Ratio, Co-channel Interference,
Fading channels, Stochastic ordering.
\end{abstract}

\section{Introduction\label{sec:Introduction}}

A Poisson point process has been adopted in the literature for the
locations of nodes in the study of shotgun cellular networks, ad-hoc
networks, and other uncoordinated and decentralized communication
networks \cite[and references therein]{Brown2000,Madhusudhanan2010a}.
An underlying assumption in all the previous work is that the density
of transmitters, referred to as base station (BS) throughout this
paper, is constant, i.e. the Poisson point process is homogeneous.
Such a model does not sufficiently represent reality.

In \cite{Madhusudhanan2010a} and here, we have modeled BS arrangement
by non-homogeneous Poisson point processes in $\mathbb{R}^{l},\ l=1,2,\ \mathrm{and}\ 3.$
We aim to characterize carrier-to-interference ratio $\left(\frac{C}{I}\right)$
at a given mobile station (MS). In \cite{Madhusudhanan2010a}, we
have derived semi-analytical expressions for the tail probability
of $\frac{C}{I},$ denoted as $\mathbb{P}\left(\left\{ \frac{C}{I}>y\right\} \right),$
by deriving the characteristic function of the reciprocal of $\frac{C}{I}.$
Moreover, the $\frac{C}{I}$ characterization holds for a wide range
of scenarios of interest such as arbitrary distributions for fading
and random BS transmission powers, arbitrary path-loss models and
arbitrary locations for the MS. In spite of such a general result,
it is still not convenient to compare the $\frac{C}{I}$ distributions
of two different networks.

Is it possible to qualitatively compare two $\frac{C}{I}$ distributions
by only examining the BS densities without having to obtain the $\frac{C}{I}$
distributions? This paper answers the question affirmatively for certain
BS densities by developing a stochastic ordering based tool. Concepts
of stochastic ordering have been applied to scenarios of interest
in wireless communications in \cite{Tepedelenlioglu2011}. To the
best of our knowledge, this is the first work that uses stochastic
ordering to understand large scale random wireless networks. The main
result of this paper is Theorem \ref{thm:st-order} for which Section
\ref{sub:ctoiso} develops the necessary tools. The utility of this
result is explored in Section \ref{sec:applicationd}, by considering
several scenarios of interest in modeling the wireless network.

\section{System Model\label{sec:modelreview}}

The \textit{Shotgun Cellular System (SCS) }is a model for the cellular
system in which the BSs are distributed in a $l$-dimensional plane
($l-$D, typically $l=1,2,\ \mathrm{and}\ 3$) according to a non-homogeneous
Poisson point process in $\mathbb{R}^{l}$. The intensity function
of the Poisson point process is called the BS density function.

Without loss of generality, we restrict our attention to 1-D SCSs,
because for the $\frac{C}{I}$ analysis, the $l$-D SCSs can be reduced
to an equivalent 1-D SCS \cite[Lemma 2]{Madhusudhanan2010a} with
a BS density function $\lambda(r),$ where $r\ge0$ is the distance
of the BS from a mobile-station (MS) located at the origin. For example,
a \emph{homogeneous} $l$-D SCS with density $\lambda_{0}\left(>0\right)$
is equivalent to a 1-D SCS with density function $\lambda\left(r\right)=\lambda_{0}b_{l}r^{l-1},\ r\ge0,\ b_{1}=2,\ b_{2}=2\pi,\ \mathrm{and}\ b_{3}=4\pi$
\cite[Corollary 1]{Madhusudhanan2010a}.

The BSs are assumed to have independent and identically distributed
(i.i.d.) random transmission powers $K_{i}$'s and shadow fadings
$\Psi_{i}$'s across BSs. The deterministic path-loss is $R^{-\varepsilon},\ \varepsilon>0$.
We assume an interference limited system and omit background noise.
We focus on the signal quality of a MS at the origin. The MS chooses
to communicate with the BS that corresponds to the strongest received
signal power, referred to as the {}``serving BS''. All other BSs
are the {}``interfering BSs''. The signal quality at the MS is measured
by $\frac{C}{I}=\frac{K_{S}\Psi_{S}R_{S}^{-\varepsilon}}{\sum_{i=1}^{\infty}K_{i}\Psi_{i}R_{i}^{-\varepsilon}},$
where $S$ indexes the serving BS and $i$ indexes the interfering
BSs. Further, $R_{S}\le R_{1}\le R_{2}\le\cdots$ are ordered BS locations.

\section{The Stochastic Ordering of $\frac{C}{I}$\label{sub:ctoiso}}

In this section, we present the theoretical tools that are used to
compare $\frac{C}{I}$ tail probability by comparing the equivalent
1-D BS densities $\lambda(r).$ Since the effect of i.i.d. shadow
fading factors and i.i.d. transmission powers can be captured by modifying
the BS density as shown in Section \ref{sub:Generalization-to-Networks},
they are assumed to be 1 for all BSs. The generalization to arbitrary
path loss model is given in \cite[Section VI]{Madhusudhanan2010a},
which is also equivalent to modifying $\lambda(r).$ As a result,
$\frac{C}{I}=\frac{R_{1}^{-\varepsilon}}{\sum_{i=2}^{\infty}R_{i}^{-\varepsilon}}$.
\begin{defn}
\label{def:UsualStOrder}Let $X$ and $Y$ be two random variables
such that $\mathbb{P}\left(\left\{ X>x\right\} \right)\le\mathbb{P}\left(\left\{ Y>x\right\} \right),\ \forall\ x\in\left(-\infty,\infty\right)$,
then X is \textit{smaller than} Y \textit{in the usual stochastic
order} and this is denoted by $X\le_{\mathrm{st}}Y$. Further, $X=_{\mathrm{st}}Y$
means $\mathbb{P}\left(\left\{ X>x\right\} \right)=\mathbb{P}\left(\left\{ Y>x\right\} \right),\ \forall\ x\in\left(-\infty,\infty\right).$
\cite[p. 3]{ShakedShanthikumar06}

If $X$ and $Y$ are the $\frac{C}{I}$ at the MS in two different
SCSs, $X\le_{\mathrm{st}}Y$ implies that the MS in the SCS corresponding
to $Y$ is more likely to achieve better signal quality than in the
SCS corresponding to $X$.
\end{defn}
Let $\left\{ R_{k}\right\} _{k=1}^{\infty}$ represent the set of
distances of BSs from the MS (indexed in the ascending order of the
distance), $D_{k+1}=R_{k+1}-R_{k}$ be the distance between two adjacent
BSs, and $f_{D_{k+1}|R_{k}}\left(\left.d\right|r;\lambda(s)\right)$
be the probability density function (p.d.f.) of $D_{k+1}$ conditioned
on $R_{k}=r$, as a function of the BS density $\lambda(s).$
\begin{lem}
\label{lem:scale}\emph{ \begin{eqnarray}
f_{D_{k+1}|R_{k}}\left(d\left|r;\lambda(s)\right.\right) & = & e^{-\int_{r}^{r+d}\lambda(s)ds}\lambda(r+d),\ \mathrm{and}\label{eq:pdf-D}\\
f_{aD_{k+1}|aR_{k}}\left(d'\left|r';\lambda(s)\right.\right) & = & f_{D_{k+1}|R_{k}}\left(d'\left|r';\frac{1}{a}\lambda\left(\frac{s}{a}\right)\right.\right).\label{eq:pdf-aD}\end{eqnarray}
}\end{lem}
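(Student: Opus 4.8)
The plan is to establish both formulas from the basic properties of the non-homogeneous Poisson point process governing the BS locations.

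For equation~\eqref{eq:pdf-D}, I would begin from the defining property of a one-dimensional non-homogeneous Poisson process with intensity $\lambda(s)$: the number of points falling in a region $A$ is Poisson-distributed with mean $\int_A \lambda(s)\,ds$, and counts in disjoint regions are independent. Conditioning on $R_k = r$ means we know the $k$-th nearest BS sits at distance $r$; by the independence of the process on $(r,\infty)$ from its behavior on $[0,r]$, the location of the next BS depends only on the intensity restricted to $(r,\infty)$. The event $\{D_{k+1} > d\}$ conditioned on $R_k = r$ is precisely the event that no points lie in the interval $(r, r+d)$, which by the Poisson property has probability $e^{-\int_r^{r+d}\lambda(s)\,ds}$. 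Thus the conditional survival function is $\bar F_{D_{k+1}|R_k}(d|r) = e^{-\int_r^{r+d}\lambda(s)\,ds}$, and differentiating with respect to $d$ (via the fundamental theorem of calculus, the derivative of the exponent being $\lambda(r+d)$) yields the density $e^{-\int_r^{r+d}\lambda(s)\,ds}\lambda(r+d)$, as claimed.

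For equation~\eqref{eq:pdf-aD}, I would use a change-of-variables/scaling argument. Writing $D_{k+1}' = aD_{k+1}$ and $R_k' = aR_k$, the joint conditional law transforms under the linear rescaling $s \mapsto s/a$. The cleanest route is to substitute $d = d'/a$ and $r = r'/a$ into \eqref{eq:pdf-D}, then account for the Jacobian factor $1/a$ coming from the density transformation $f_{aD_{k+1}|aR_k}(d'|r') = \frac{1}{a} f_{D_{k+1}|R_k}(d'/a \,|\, r'/a)$. Substituting the explicit form from \eqref{eq:pdf-D} and performing the change of variable $s = s'/a$ (so $ds = ds'/a$) inside the integral $\int_{r'/a}^{(r'+d')/a}\lambda(s)\,ds = \int_{r'}^{r'+d'}\lambda(s'/a)\,\tfrac{1}{a}\,ds'$, I expect the $1/a$ factors and the rescaled argument to reorganize exactly into $e^{-\int_{r'}^{r'+d'}\frac{1}{a}\lambda(s'/a)\,ds'}\cdot\frac{1}{a}\lambda\bigl(\tfrac{r'+d'}{a}\bigr)$, which is precisely $f_{D_{k+1}|R_k}\bigl(d'\,|\,r';\tfrac{1}{a}\lambda(\tfrac{s}{a})\bigr)$ read off from \eqref{eq:pdf-D} with the density argument replaced by $\tfrac{1}{a}\lambda(\tfrac{s}{a})$.

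The conceptually delicate point, which I would treat carefully, is the first step: justifying that conditioning on $R_k = r$ reduces the law of the \emph{next} interpoint distance to a statement about the process on $(r,\infty)$ alone. This rests on the strong independence structure of the Poisson process across disjoint intervals, and on the fact that the density $\lambda$ need not be the homogeneous one so the memoryless intuition must be phrased in terms of the cumulative intensity $\int_r^{r+d}\lambda(s)\,ds$ rather than a simple product $\lambda d$. The scaling identity \eqref{eq:pdf-aD} is then essentially bookkeeping once \eqref{eq:pdf-D} is in hand; the only subtlety there is tracking the Jacobian from the density rescaling against the change of variable inside the integral, and confirming the two $1/a$ contributions land in the right places so that the rescaled intensity $\tfrac{1}{a}\lambda(\tfrac{\cdot}{a})$ emerges cleanly.
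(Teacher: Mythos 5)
Your proposal is correct and follows essentially the same route as the paper: the paper cites standard Poisson-process properties for \eqref{eq:pdf-D} (which you derive explicitly via the void probability and differentiation of the conditional survival function), and proves \eqref{eq:pdf-aD} by exactly your three steps — the Jacobian identity $f_{aD_{k+1}|aR_{k}}(d'|r')=\frac{1}{a}f_{D_{k+1}|R_{k}}(\frac{d'}{a}|\frac{r'}{a})$, substitution of \eqref{eq:pdf-D}, and the change of variable $s=s'/a$ inside the exponent. No gaps; your treatment of the conditioning on $R_k=r$ via independence over disjoint intervals is the correct justification behind the paper's citation.
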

\begin{proof}
Equation (\ref{eq:pdf-D}) follows from the properties of Poisson
processes \cite{Kingman1993,Ross1983}. Equation (\textbf{\ref{eq:pdf-aD}})
is proved by $f_{aD_{k+1}|aR_{k}}\left(d'|r';\lambda(s)\right)$ $\overset{\left(a\right)}{=}\frac{1}{a}f_{D_{k+1}|R_{k}}\left(\frac{d'}{a}|\frac{r'}{a};\lambda(s)\right)$
$\overset{\left(b\right)}{=}\frac{1}{a}\lambda\left(\frac{r'+d'}{a}\right)\exp\left(-\int_{\frac{r'}{a}}^{\frac{r'+d'}{a}}\lambda(s)ds\right)$
$\overset{\left(c\right)}{=}\frac{1}{a}\lambda\left(\frac{r'+d'}{a}\right)\exp\left(-\int_{r'}^{r'+d'}\frac{1}{a}\lambda\left(\frac{s'}{a}\right)ds'\right),$
where $\left(a\right)$ is obtained by a variable change; $\left(b\right)$
follows from (\ref{eq:pdf-D}); and $\left(c\right)$ is obtained
by a variable change and gives (\ref{eq:pdf-aD}).
\end{proof}
Lemma \ref{lem:scale} means that scaling $D_{k+1}$ and $R_{k}$
by $a$ is equivalent to scaling the BS density as $\frac{1}{a}\lambda(\frac{r}{a})$.
The significance of Lemma \ref{lem:scale} in the context of $\frac{C}{I}$
is as follows.
\begin{cor}
\label{cor:SIR-scaling}\emph{The distribution of $\frac{C}{I}$ at
the MS in the 1-D SCS with BS density function $\lambda(r)$ is the
same as that in 1-D SCSs with BS density functions $\frac{1}{a}\lambda(\frac{r}{a})$,
$\forall\ a>0$, i.e., $\left.\frac{C}{I}\right|_{\lambda(r)}=_{\text{st}}\left.\frac{C}{I}\right|_{\frac{1}{a}\lambda(\frac{r}{a})}$. }\end{cor}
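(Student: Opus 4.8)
The plan is to show that the entire ordered distance structure is preserved under the scaling, so that $\frac{C}{I}$, being a function of those distances, has the same distribution in both systems. First I would observe that $\frac{C}{I}=\frac{R_1^{-\varepsilon}}{\sum_{i=2}^\infty R_i^{-\varepsilon}}$ depends on the BS configuration only through the ordered distances $\{R_k\}_{k=1}^\infty$. Therefore it suffices to prove that the joint distribution of $\{R_k\}$ under density $\lambda(r)$ coincides with the joint distribution of $\{a R_k\}$ (equivalently, the scaled point process) under density $\frac{1}{a}\lambda(\frac{r}{a})$. Since $\frac{C}{I}$ is scale-invariant --- replacing every $R_k$ by $aR_k$ multiplies both numerator and denominator by $a^{-\varepsilon}$, leaving the ratio unchanged --- matching the distribution of $\{aR_k\}$ rather than $\{R_k\}$ itself is exactly what is needed.

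The key computational step is to assemble the joint law of $\{R_k\}$ from the one-step conditional densities supplied by Lemma~\ref{lem:scale}. Writing the joint density of $(R_1,\dots,R_n)$ as a product of $f_{D_{k+1}\mid R_k}$ terms (together with the marginal of $R_1$, which is the same conditional density with $r=0$), I would invoke the scaling identity \eqref{eq:pdf-aD}: the joint density of the scaled distances $(aR_1,\dots,aR_n)$ under $\lambda$ equals the joint density of $(R_1,\dots,R_n)$ under $\frac{1}{a}\lambda(\frac{\cdot}{a})$, since each factor transforms this way. This establishes $\{aR_k\}_{\lambda} =_{\mathrm{st}} \{R_k\}_{\frac{1}{a}\lambda(\frac{\cdot}{a})}$ as point processes.

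Combining the two steps finishes the argument: applying the scale-invariant map $\{r_k\}\mapsto \frac{r_1^{-\varepsilon}}{\sum_{i\ge 2} r_i^{-\varepsilon}}$ to both sides of this distributional equality gives $\left.\frac{C}{I}\right|_{\lambda(r)} =_{\mathrm{st}} \left.\frac{C}{I}\right|_{\frac{1}{a}\lambda(\frac{r}{a})}$, which is precisely the claim. The main obstacle I anticipate is making the passage from the finite-dimensional marginals to the full infinite configuration rigorous: one should either argue at the level of finite-dimensional distributions and appeal to Kolmogorov's extension theorem, or phrase the result directly in terms of the dilated Poisson point process, using the standard mapping theorem that a Poisson process of intensity $\lambda$ pushed forward by $r\mapsto ar$ is Poisson with intensity $\frac{1}{a}\lambda(\frac{\cdot}{a})$. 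The per-step densities of Lemma~\ref{lem:scale} make either route routine, so the substance of the proof is really the scale-invariance observation rather than any delicate estimate.
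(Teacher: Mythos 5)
Your proposal is correct and follows essentially the same route as the paper's own proof: scale-invariance of $\frac{C}{I}$ under $R_k\mapsto aR_k$, combined with Lemma \ref{lem:scale} to identify the law of the scaled ordered distances $\left\{ aR_{k}\right\}$ under $\lambda(r)$ with the law of the ordered distances under $\frac{1}{a}\lambda(\frac{r}{a})$. The paper compresses the chaining of one-step conditional densities into the remark that the ordered BS locations are determined by the inter-BS distances; your more explicit assembly of the joint density (and the appeal to Kolmogorov extension or the Poisson mapping theorem for the infinite configuration) simply fills in that step.
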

\begin{proof}
Let $\left\{ R_{k}\right\} _{k=1}^{\infty}$ correspond to the 1-D
SCS with BS density function $\lambda(r).$ Then, since the ordered
BS locations $R_{k}$'s are determined by inter-BS distances, it follows
from Lemma \ref{lem:scale} that $\begin{array}{c}
\left.\frac{C}{I}\right|_{\lambda(r)}\end{array}=\left.\frac{(aR_{1})^{-\varepsilon}}{\sum_{k=2}^{\infty}(aR_{k})^{-\varepsilon}}\right|_{\lambda(r)}=_{\text{st}}\left.\frac{\left(R_{1}^{'}\right)^{-\varepsilon}}{\sum_{k=2}^{\infty}\left(R_{k}^{'}\right)^{-\varepsilon}}\right|_{\frac{1}{a}\lambda(\frac{r}{a})},$ where $R_{k}^{'}$'s corresponding to $\frac{1}{a}\lambda(\frac{r}{a})$
have the same distribution as $aR_{k}$'s with $\lambda(r)$.
\end{proof}
As a result, $\left\{ \frac{1}{a}\lambda\left(\frac{r}{a}\right),\ r\ge0,\ a>0\right\} $
forms a parametric family of BS density functions such that the 1-D
SCSs corresponding to them have the same $\frac{C}{I}$ at the MS.
In other words, appropriately scaling the BS density function will
not change the p.d.f. of $\frac{C}{I}$. Moreover, the following special
case is a direct corollary of the above result.
\begin{cor}
\emph{\label{cor:homogeousCTOIinvariance}In a }homogeneous\emph{
$l$-D SCS, $\frac{C}{I}$ is not a function of the BS density.}\end{cor}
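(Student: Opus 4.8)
The plan is to reduce the statement to the scale-invariance already established in Corollary~\ref{cor:SIR-scaling}. First I would recall the $l$-D to 1-D reduction stated in the system model: a homogeneous $l$-D SCS with constant density $\lambda_{0}>0$ is equivalent, for the purpose of the $\frac{C}{I}$ analysis, to a 1-D SCS with BS density function $\lambda(r)=\lambda_{0}b_{l}r^{l-1}$, $r\ge0$. Thus it suffices to show that the $\frac{C}{I}$ distribution associated with this particular 1-D density does not depend on the parameter $\lambda_{0}$.

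Next I would apply the density-scaling map $\lambda(r)\mapsto\frac{1}{a}\lambda\!\left(\frac{r}{a}\right)$ of Corollary~\ref{cor:SIR-scaling} to this density and simply compute its effect. Substituting gives
\[
\frac{1}{a}\lambda\!\left(\frac{r}{a}\right)=\frac{1}{a}\,\lambda_{0}b_{l}\left(\frac{r}{a}\right)^{l-1}=\frac{\lambda_{0}}{a^{l}}\,b_{l}r^{l-1},
\]
which is exactly the 1-D density of a homogeneous $l$-D SCS whose constant density is $\frac{\lambda_{0}}{a^{l}}$. Hence the scaling map acts on the homogeneous family by the reparametrization $\lambda_{0}\mapsto\lambda_{0}/a^{l}$ while preserving the homogeneous functional form $b_{l}r^{l-1}$.

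Finally I would invoke Corollary~\ref{cor:SIR-scaling} to conclude that the two homogeneous SCSs with densities $\lambda_{0}$ and $\lambda_{0}/a^{l}$ have identical $\frac{C}{I}$ distributions for every $a>0$. Since $a\mapsto a^{-l}$ maps $(0,\infty)$ onto $(0,\infty)$, given any two densities $\lambda_{0}$ and $\lambda_{0}'$ one may choose $a=(\lambda_{0}/\lambda_{0}')^{1/l}$ so that $\lambda_{0}/a^{l}=\lambda_{0}'$; therefore all homogeneous densities yield the same $\frac{C}{I}$ distribution, which proves the claim. I do not anticipate a genuine obstacle, as everything follows from the one-line substitution above; the only points requiring care are checking that the scaling keeps the density within the homogeneous family (i.e., that the exponent of $r$ is unchanged) and that the induced map on $\lambda_{0}$ is surjective onto $(0,\infty)$, so that no positive density is excluded.
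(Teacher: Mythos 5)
Your proposal is correct and is essentially identical to the paper's own proof: both reduce the homogeneous $l$-D SCS to the 1-D density $\lambda_{0}b_{l}r^{l-1}$ and then observe that the scaling map of Corollary~\ref{cor:SIR-scaling} sends it to $\lambda_{0}a^{-l}b_{l}r^{l-1}$, i.e., reparametrizes $\lambda_{0}$ while preserving the homogeneous form. Your added remark that $a\mapsto a^{-l}$ is surjective onto $(0,\infty)$ is a small explicit touch the paper leaves implicit, but it does not change the argument.
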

\begin{proof}
Firstly, recall that the $\frac{C}{I}$ at the MS in a \emph{homogeneous}
$l-$D SCS with BS density $\lambda_{0}$ is the same as that in a
$1-$D SCS with a BS density function $\lambda\left(r\right)=\lambda_{0}b_{l}r^{l-1}.$
Next, from Corollary \ref{cor:SIR-scaling}, the distribution of $\frac{C}{I}$
in this SCS is the same as that in a 1-D SCS with the BS density function
$\frac{1}{a}\lambda\left(\frac{r}{a}\right)=\lambda_{0}\alpha b_{l}r^{l-1},\ \alpha=a^{-l},\ a>0.$
Thus, distributions of $\frac{C}{I}$ corresponding to $\alpha\lambda_{0}$
and $\lambda_{0}$ are the same.
\end{proof}
This was also  observed in \cite{Madhusudhanan2010a}, where we showed
that the expression for the characteristic function of $\left(\frac{C}{I}\right)^{-1}$
did not involve $\lambda_{0}.$ Corollary \ref{cor:homogeousCTOIinvariance}
provides a simpler and more fundamental proof. Next, we define a notation
used in Theorem \ref{thm:st-order}.
\begin{defn}
\label{def:DensityFunction}For BS density function $\lambda\left(r\right)$,
the cumulative BS density function is defined as $\mu(r)\triangleq\int_{0}^{r}\lambda(s)ds$,
and its inverse function is define as $\mu^{-1}(q)\triangleq\sup\{r:\mu(r)\le q\}$.
\end{defn}
Since $\lambda(r)\geq0$, $\mu\left(r\right)$ is a monotonically
increasing function of $r$. In general, the inverse function is not
injective since $\lambda(r)$ can be zero in arbitrary intervals of
$r\in\left[0,\infty\right).$ The above definition makes it injective.
For certain BS densities, it is possible to compare two $\frac{C}{I}$'s
by comparing the densities without solving for the distributions.
It is facilitated by Theorem \ref{thm:st-order}.
\begin{thm}
\label{thm:st-order}\emph{Let $\left\{ \lambda_{1}\left(r\right),\mu_{1}\left(r\right),\mu_{1}^{-1}\left(q\right)\right\} $
and $\left\{ \lambda_{2}\left(r\right),\mu_{2}\left(r\right),\mu_{2}^{-1}\left(q\right)\right\} $
be the BS density functions, cumulative BS density functions and their
inverse functions for two 1-D SCSs, respectively. The $\frac{C}{I}$
at the MS follows the stochastic order $\left.\frac{C}{I}\right|_{\lambda_{1}(r)}\le_{\text{st}}\left.\frac{C}{I}\right|_{\lambda_{2}(r)}$,
if for each $q>0$ and $a=\frac{\mu_{2}^{-1}(q)}{\mu_{1}^{-1}(q)}$,
$\frac{1}{a}\lambda_{1}\left(\frac{r}{a}\right)\ge\lambda_{2}(r),\;\forall\ r\ge\mu_{2}^{-1}(q)$}.

\end{thm}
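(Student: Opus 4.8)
The plan is to reduce the whole comparison to a single pathwise inequality by coupling the two base-station processes on one probability space, and then to read the density hypothesis as a monotonicity statement about the inverse cumulative densities $\mu_1^{-1},\mu_2^{-1}$ of Definition \ref{def:DensityFunction}. First I would invoke the mapping (time-change) theorem for Poisson processes \cite{Kingman1993}: if $\{R_k\}_{k\ge1}$ is the ascending sequence of a Poisson process with intensity $\lambda_i(r)$, then $\{U_k:=\mu_i(R_k)\}_{k\ge1}$ is a homogeneous unit-rate Poisson process on $[0,\infty)$ whose law does not depend on $i$. This lets me couple the two SCSs by drawing a single unit-rate process $\{U_k\}$ and setting $R_k^{(i)}=\mu_i^{-1}(U_k)$ for $i=1,2$. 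Since $\left.\frac{C}{I}\right|_{\lambda_i}=\frac{(R_1^{(i)})^{-\varepsilon}}{\sum_{k\ge2}(R_k^{(i)})^{-\varepsilon}}$, dividing numerator and denominator by $(R_1^{(i)})^{-\varepsilon}$ gives
\begin{equation}
\left.\frac{C}{I}\right|_{\lambda_i}=\left[\sum_{k\ge2}\left(\frac{\mu_i^{-1}(U_1)}{\mu_i^{-1}(U_k)}\right)^{\varepsilon}\right]^{-1},\qquad i=1,2 .
\end{equation}

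By the coupling characterization of the usual stochastic order (\cite[Thm.~1.A.1]{ShakedShanthikumar06}) together with Definition \ref{def:UsualStOrder}, it then suffices to show $\left.\frac{C}{I}\right|_{\lambda_1}\le\left.\frac{C}{I}\right|_{\lambda_2}$ almost surely in this coupling. Because $U_1\le U_k$ for $k\ge2$ and $x\mapsto x^{\varepsilon}$ is increasing, comparing the two bracketed sums term by term reduces the claim to the single inequality $\frac{\mu_1^{-1}(U_1)}{\mu_1^{-1}(U_k)}\ge\frac{\mu_2^{-1}(U_1)}{\mu_2^{-1}(U_k)}$, which after rearrangement is equivalent to
\begin{equation}
\frac{\mu_2^{-1}(u)}{\mu_1^{-1}(u)}\ \ge\ \frac{\mu_2^{-1}(q)}{\mu_1^{-1}(q)},\qquad \forall\, u\ge q,
\end{equation}
evaluated at $q=U_1$ and $u=U_k$.

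The heart of the argument is to derive this monotone-ratio inequality from the density hypothesis, and this is where Lemma \ref{lem:scale} and Corollary \ref{cor:SIR-scaling} enter, since the scaled density is precisely the natural object there. Fixing $q$ and $a=\mu_2^{-1}(q)/\mu_1^{-1}(q)$, let $\tilde\lambda_1(r)=\frac1a\lambda_1(r/a)$; its cumulative density is $\tilde\mu_1(r)=\int_0^r\tilde\lambda_1(s)\,ds=\mu_1(r/a)$ (the change of variables already used in Lemma \ref{lem:scale}), whence $\tilde\mu_1(\mu_2^{-1}(q))=\mu_1(\mu_1^{-1}(q))=q=\mu_2(\mu_2^{-1}(q))$. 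Integrating the hypothesis $\tilde\lambda_1\ge\lambda_2$ on $[\mu_2^{-1}(q),\infty)$ from the common value at $\mu_2^{-1}(q)$ yields $\tilde\mu_1(r)\ge\mu_2(r)$ for all $r\ge\mu_2^{-1}(q)$. Inverting this, and using $\tilde\mu_1^{-1}(u)=a\,\mu_1^{-1}(u)$, gives $\mu_2^{-1}(u)\ge a\,\mu_1^{-1}(u)$ for all $u\ge q$, which is exactly the displayed inequality.

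I expect the main obstacle to be the careful handling of the inverse functions. Because each $\lambda_i$ may vanish on intervals, $\mu_i$ need not be strictly increasing, so I must work throughout with the sup-based inverse of Definition \ref{def:DensityFunction} and check that the passage $\tilde\mu_1(r)\ge\mu_2(r)\Rightarrow\mu_2^{-1}(u)\ge\tilde\mu_1^{-1}(u)$ remains valid at points of flatness and at the boundary $r=\mu_2^{-1}(q)$, and likewise that $R_k=\mu_i^{-1}(U_k)$ holds almost surely under the time-change. The remaining ingredients---the mapping theorem, monotonicity of $x\mapsto x^{\varepsilon}$ for $\varepsilon>0$, and summing the term-wise bound---are routine.
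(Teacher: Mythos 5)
Your proposal is correct, but it is not the paper's argument; it is a genuinely different route. The paper conditions on the serving BS through $Q=\mu(R_1)\sim\mathrm{Exp}(1)$, uses the scaling invariance of Corollary \ref{cor:SIR-scaling} to place both serving BSs at the common location $\mu_2^{-1}(q)$, and then compares the two \emph{conditional interference} distributions by discretizing the interference field into Bernoulli random variables with success probabilities $\lambda(r_i)\Delta r+o(\Delta r)$, invoking closure of $\le_{\mathrm{st}}$ under sums and its reversal under decreasing functions \cite[Theorems 1.A.3(a),(b)]{ShakedShanthikumar06}. You instead build one global coupling: a single unit-rate Poisson process $\{U_k\}$ drives both SCSs via $R_k^{(i)}=\mu_i^{-1}(U_k)$ (mapping theorem), and the stochastic order follows from an almost-sure pathwise inequality, which you reduce to the deterministic statement that $u\mapsto\mu_2^{-1}(u)/\mu_1^{-1}(u)$ is non-decreasing --- a reformulation of the theorem's hypothesis obtained by integrating $\frac{1}{a}\lambda_1(\cdot/a)\ge\lambda_2$ from the common value $q$ at $\mu_2^{-1}(q)$ and then inverting. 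Your route buys three things: a stronger conclusion (an explicit monotone coupling realizing the order, not merely the order itself); avoidance of the discretization-and-limit step, which is the least rigorous part of the paper's appendix; and a transparent reading of the hypothesis as ratio-monotonicity of the inverse cumulative densities. The paper's route stays entirely within the stochastic-order calculus and conditional comparisons, which is the form that would generalize when a pathwise coupling is not available (e.g., with independent marks on the two processes). The technical caveats you flag all check out: $\mu_i$ is continuous and (implicitly, since the SCS has infinitely many BSs) unbounded, so $\mu_i(\mu_i^{-1}(q))=q$, the sup-inverse is strictly increasing, and the inversion of $\tilde\mu_1\ge\mu_2$ on $[\mu_2^{-1}(q),\infty)$ into $\tilde\mu_1^{-1}\le\mu_2^{-1}$ on $[q,\infty)$ is valid because both curves agree at the left endpoint; so none of these hide a gap.
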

See Appendix \ref{sec:st-order-proof} for the proof. Applications
of the above theorem are\emph{ }in the next section.

\section{\label{sec:applicationd}Applications of the $\frac{C}{I}$ stochastic
ordering}

\subsection{Comparison of Homogeneous $l-$D SCSs $\left(l=1,2,\ \mathrm{and}\ 3\right)$}

Here, we show that the signal quality degrades as the dimension $l$
of the \emph{homogeneous} $l$-D SCS increases, for which we need
the following corollaries.
\begin{cor}
\label{cor:monotone-diff}\emph{For each $q>0$ and $a=\frac{\mu_{2}^{-1}(q)}{\mu_{1}^{-1}(q)}$,
if $\Delta\lambda(r)\triangleq\frac{1}{a}\lambda_{1}(\frac{r}{a})-\lambda_{2}(r)$
is a non-decreasing function for all $r\ge0$, then $\left.\frac{C}{I}\right|_{\lambda_{1}(r)}\le_{\text{st}}\left.\frac{C}{I}\right|_{\lambda_{2}(r)}$.}\end{cor}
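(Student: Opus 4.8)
The plan is to reduce the hypothesis of Corollary~\ref{cor:monotone-diff} to that of Theorem~\ref{thm:st-order}: once I establish that $\Delta\lambda(r)\ge 0$ for all $r\ge\mu_2^{-1}(q)$, the stochastic ordering follows immediately by applying Theorem~\ref{thm:st-order}. The crux is therefore to show that the non-decreasing hypothesis on $\Delta\lambda$, combined with the specific choice $a=\mu_2^{-1}(q)/\mu_1^{-1}(q)$, forces $\Delta\lambda$ to be non-negative on $[\mu_2^{-1}(q),\infty)$.

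First I would compute the cumulative density of the scaled function $\tilde\lambda_1(r)\triangleq\frac{1}{a}\lambda_1(r/a)$. A change of variable gives $\int_0^r\frac{1}{a}\lambda_1(s/a)\,ds=\mu_1(r/a)$, so the cumulative density associated with $\tilde\lambda_1$ is $\mu_1(r/a)$. Evaluating at $r=\mu_2^{-1}(q)$ and substituting $a=\mu_2^{-1}(q)/\mu_1^{-1}(q)$ collapses the argument to $\mu_1(\mu_1^{-1}(q))=q$. Since $\mu_2(\mu_2^{-1}(q))=q$ as well, the two cumulative densities coincide at $r=\mu_2^{-1}(q)$, which is equivalent to $\int_0^{\mu_2^{-1}(q)}\Delta\lambda(r)\,dr=q-q=0$.

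Next I would invoke the elementary fact that a non-decreasing function whose integral over $[0,\mu_2^{-1}(q)]$ vanishes must be non-negative at the right endpoint: if $\Delta\lambda$ were negative just below $\mu_2^{-1}(q)$, monotonicity would force it to be negative on the entire interval, contradicting the zero integral. Hence $\Delta\lambda(\mu_2^{-1}(q))\ge 0$, and the non-decreasing property then propagates this forward, giving $\Delta\lambda(r)\ge\Delta\lambda(\mu_2^{-1}(q))\ge 0$ for every $r\ge\mu_2^{-1}(q)$. This is precisely the condition required by Theorem~\ref{thm:st-order}, completing the argument.

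The main obstacle, though largely technical, lies in the cumulative-density identities $\mu_i(\mu_i^{-1}(q))=q$. These rely on the continuity of each $\mu_i$ (which holds since $\lambda_i$ is locally integrable) and on $q$ lying in the range of $\mu_i$; when $\lambda_i$ vanishes on intervals---the very situation motivating the $\sup$ in Definition~\ref{def:DensityFunction}---one must verify these identities with care, and also confirm that $\mu_1^{-1}(q)>0$ so that $a$ is well-defined for each $q>0$.
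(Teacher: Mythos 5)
Your proposal is correct and follows essentially the same route as the paper's proof: both show that $\int_{0}^{\mu_{2}^{-1}(q)}\Delta\lambda(s)\,ds=0$ via the choice of $a$, deduce $\Delta\lambda\left(\mu_{2}^{-1}(q)\right)\ge0$ by the same contradiction argument using monotonicity, and then invoke Theorem \ref{thm:st-order}. If anything, your write-up is slightly more complete, since you state explicitly the forward propagation $\Delta\lambda(r)\ge\Delta\lambda\left(\mu_{2}^{-1}(q)\right)\ge0$ for $r\ge\mu_{2}^{-1}(q)$ (which the paper leaves implicit) and flag the technical hypotheses behind $\mu_{i}\left(\mu_{i}^{-1}(q)\right)=q$.
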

\begin{proof}
Note that $\underset{0}{\overset{\mu_{2}^{-1}\left(q\right)}{\int}}\frac{1}{a}\lambda_{1}\left(\frac{s}{a}\right)ds=q=\underset{0}{\overset{\mu_{2}^{-1}\left(q\right)}{\int}}\lambda_{2}(s)ds.$
Hence, $\underset{0}{\overset{\mu_{2}^{-1}\left(q\right)}{\int}}\Delta\lambda(s)ds=0$.
Suppose $\Delta\lambda(\mu_{2}^{-1}(q))<0$, then $\Delta\lambda(r)<0,\; r\in[0,\mu_{2}^{-1}(q)]$,
since $\Delta\lambda(r)$ is non-decreasing. This is a contradiction.
Thus, $\Delta\lambda(\mu_{2}^{-1}(q))\ge0.$ Using Theorem \ref{thm:st-order},
the corollary is proved.\end{proof}
\begin{cor}
\label{cor:homogenousStOrder}\emph{For a }homogeneous\emph{ $l$-D
SCS with BS density $\lambda_{0}$ and its equivalent 1-D BS density
function $\lambda_{l}\left(r\right)=\lambda_{0}b_{l}r^{l-1},\ r\ge0$,
multiplying $\lambda_{l}\left(r\right)$ with a non-increasing function
$\beta(r)>0$ improves the $\frac{C}{I}$, i.e., $\left.\frac{C}{I}\right|_{\lambda_{l}(r)}\le_{\text{st}}\left.\frac{C}{I}\right|_{\beta(r)\lambda_{l}(r)}$.
The inequality reverses if $\beta(r)$ is non-decreasing. }\end{cor}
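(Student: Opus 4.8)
The plan is to invoke Theorem \ref{thm:st-order} directly rather than Corollary \ref{cor:monotone-diff}. The latter is tempting but fails for $l\ge 2$: with $\lambda_1=\lambda_l$ and $\lambda_2=\beta\lambda_l$ the difference is $\Delta\lambda(r)=\lambda_0 b_l r^{l-1}\bigl(a^{-l}-\beta(r)\bigr)$, a product of the non-negative increasing factor $r^{l-1}$ and the non-decreasing factor $a^{-l}-\beta(r)$, and such a product need not be monotone where the second factor is negative. The decisive feature I would exploit instead is that $\lambda_l(r)=\lambda_0 b_l r^{l-1}$ is a pure power law, so scaling reproduces it: $\frac{1}{a}\lambda_l\!\left(\frac{r}{a}\right)=a^{-l}\lambda_l(r)$. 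Taking $\lambda_1=\lambda_l$ and $\lambda_2=\beta\lambda_l$, the hypothesis $\frac{1}{a}\lambda_1\!\left(\frac{r}{a}\right)\ge\lambda_2(r)$ of Theorem \ref{thm:st-order}, after cancelling the common positive factor $\lambda_l(r)$, collapses to the scalar condition $a^{-l}\ge\beta(r)$ for all $r\ge\mu_2^{-1}(q)$.

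First I would record the explicit quantities. Since $\mu_1(r)=\frac{\lambda_0 b_l}{l}r^l$, we have $\mu_1^{-1}(q)=\bigl(lq/(\lambda_0 b_l)\bigr)^{1/l}$, while $\mu_2(r)=\int_0^r\beta(s)\lambda_0 b_l s^{l-1}\,ds$ is continuous and strictly increasing (as $\beta>0$), so $\mu_2^{-1}$ is a genuine inverse with $\mu_2\bigl(\mu_2^{-1}(q)\bigr)=q$. Writing $r_2=\mu_2^{-1}(q)$ and $a=r_2/\mu_1^{-1}(q)$, I would substitute $\bigl(\mu_1^{-1}(q)\bigr)^l=lq/(\lambda_0 b_l)$ together with $q=\mu_2(r_2)$ to obtain the key identity $a^{-l}=\frac{l\int_0^{r_2}\beta(s)s^{l-1}\,ds}{r_2^{\,l}}\triangleq\bar\beta(r_2)$; that is, $a^{-l}$ is exactly the $s^{l-1}$-weighted average of $\beta$ over $[0,r_2]$.

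Because $\beta$ is non-increasing and $a^{-l}$ does not depend on $r$, for $r\ge r_2$ the right side $\beta(r)$ only decreases, so it suffices to verify $a^{-l}\ge\beta(r_2)$ at the single endpoint. This weighted-average domination $\bar\beta(r_2)\ge\beta(r_2)$ is the heart of the argument and the step I expect to be the main obstacle. I would settle it by rewriting $\beta(r_2)r_2^{\,l}=l\int_0^{r_2}\beta(r_2)s^{l-1}\,ds$, so that the inequality becomes $\int_0^{r_2}\bigl[\beta(s)-\beta(r_2)\bigr]s^{l-1}\,ds\ge 0$, which holds because $\beta$ non-increasing forces $\beta(s)\ge\beta(r_2)$ for every $s\le r_2$ while $s^{l-1}\ge 0$. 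Theorem \ref{thm:st-order} then delivers $\left.\frac{C}{I}\right|_{\lambda_l(r)}\le_{\text{st}}\left.\frac{C}{I}\right|_{\beta(r)\lambda_l(r)}$.

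For the reverse statement I would swap the roles, setting $\lambda_1=\beta\lambda_l$ (the smaller) and $\lambda_2=\lambda_l$ (the larger), with $\tilde r=\mu_1^{-1}(q)$ the inverse of $\int_0^{\cdot}\beta\lambda_l$ and $a=\mu_l^{-1}(q)/\tilde r$. The same power-law cancellation turns the hypothesis into $\beta\!\left(\frac{r}{a}\right)a^{-l}\ge 1$ on $r\ge\mu_l^{-1}(q)$; here $r/a$ increases with $r$ and, $\beta$ being non-decreasing, the binding case is again the endpoint, where $r/a=\tilde r$. The identical computation gives $a^l=\bar\beta(\tilde r)$, so the requirement $\beta(\tilde r)a^{-l}\ge 1$ is precisely $\bar\beta(\tilde r)\le\beta(\tilde r)$, i.e.\ the weighted average is now dominated by the endpoint value. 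This is the same integral inequality with its sign flipped, valid because $\beta(\tilde r)-\beta(s)\ge 0$ for $s\le\tilde r$ when $\beta$ is non-decreasing. Hence $\left.\frac{C}{I}\right|_{\beta(r)\lambda_l(r)}\le_{\text{st}}\left.\frac{C}{I}\right|_{\lambda_l(r)}$, which is the asserted reversal.
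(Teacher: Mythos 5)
Your proof is correct, and it follows a genuinely different route from the paper's, which proves the corollary by invoking Corollary \ref{cor:monotone-diff} after asserting that $\Delta\lambda(r)=\frac{1}{a}\lambda_{l}\left(\frac{r}{a}\right)-\beta(r)\lambda_{l}(r)=\lambda_{0}b_{l}\left(a^{-l}-\beta(r)\right)r^{l-1}$ is non-decreasing for any $a>0$. Your opening criticism of that route is well-founded: for $l\ge2$ this product of the increasing factor $r^{l-1}$ and the non-decreasing but sign-changing factor $a^{-l}-\beta(r)$ need not be monotone (e.g., a constant $\beta>a^{-l}$ makes $\Delta\lambda$ strictly decreasing), so the paper's argument as written is valid only for $l=1$; the gap even affects the paper's own subsequent application comparing 2-D and 3-D homogeneous SCSs, where $\Delta\lambda$ is a quadratic in $r$ with negative slope near the origin. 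Your replacement --- cancel $\lambda_{l}(r)$ using the power-law scaling $\frac{1}{a}\lambda_{l}\left(\frac{r}{a}\right)=a^{-l}\lambda_{l}(r)$, identify $a^{-l}$ as the $s^{l-1}$-weighted average $\bar\beta\left(\mu_{2}^{-1}(q)\right)$, and compare a monotone function's average over $[0,r_{2}]$ with its endpoint value $\beta(r_{2})$ --- verifies the hypothesis of Theorem \ref{thm:st-order} directly and handles both monotonicity cases symmetrically. It is worth noting how close the two arguments are in substance: your identity $a^{-l}=\bar\beta(r_{2})$ is exactly the paper's observation that $\int_{0}^{\mu_{2}^{-1}(q)}\Delta\lambda(s)ds=0$, and your endpoint comparison is the conclusion the paper tried to extract from monotonicity; in effect you have replaced the (false) hypothesis ``$\Delta\lambda$ is non-decreasing'' by the single-crossing property that $\Delta\lambda$ actually possesses, which is all the integral argument needs. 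So your proof is not merely an alternative; it repairs a genuine flaw in the paper's proof of this corollary.
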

\begin{proof}
If $\beta(r)$ is non-increasing, for any $a>0$, the density difference
$\Delta\lambda(r)=\frac{1}{a}\lambda_{l}(\frac{r}{a})-\beta(r)\lambda_{l}(r)=\lambda_{0}b_{l}\left(\frac{1}{a^{l-2}}-\beta(r)\right)r^{l-1}$
is non-decreasing. By Corollary \ref{cor:monotone-diff}, $\left.\frac{C}{I}\right|_{\lambda_{l}(r)}\le_{\text{st}}\left.\frac{C}{I}\right|_{\beta(r)\lambda_{l}(r)}$
holds. If $\beta(r)$ is non-decreasing, the same proof applies with
$\Delta\lambda(r)=\beta(r)\lambda_{l}(r)-a\lambda_{l}(\frac{r}{a})$.
\end{proof}

Hence, $\left.\frac{C}{I}\right|_{\lambda_{1}(r)}\overset{\left(a\right)}{\ge_{\text{st}}}\left.\frac{C}{I}\right|_{\lambda_{2}\left(r\right)}\overset{\left(b\right)}{\ge_{\text{st}}}\left.\frac{C}{I}\right|_{\lambda_{3}\left(r\right)}$
by plugging $l=1,2,3$ in $\lambda_{l}\left(r\right),$ respectively;
$\left(a\right)$ holds because $\lambda_{2}\left(r\right)=\beta\left(r\right)\lambda_{1}\left(r\right)$,
where $\beta\left(r\right)=\frac{b_{2}}{b_{1}}r$ is a non-decreasing
function; and similarly $\left(b\right)$ also holds. Thus, the comparison
between $\frac{C}{I}'s$ is done without finding their distributions.

\subsection{A Qualitative Comparison between Two 1-D SCS}

Consider a \emph{homogeneous} 1-D SCS with a BS density function $\lambda_{1}\left(r\right)=\lambda,\ r\ge0,$
and another 1-D SCS with a BS density function $\lambda_{2}\left(r\right)=\begin{cases}
\alpha & 0\le r\le\rho\\
\beta & r>\rho\end{cases},$ where $\alpha>\beta.$ Such $\lambda_{2}(r)$ might describe, for
example, a highway passing through a region of greater population
(BS density of $\alpha$) and then a region of smaller population
(BS density of $\beta$). Usually, such a scenario is approximated
by a constant BS density through out the highway, which is represented
by $\lambda_{1}\left(r\right).$ If $\lambda=\alpha$, it is easy
to guess that $\left.\frac{C}{I}\right|_{\lambda_{1}(r)}\le_{\text{st}}\left.\frac{C}{I}\right|_{\lambda_{2}(r)}$.
But if $\lambda>\alpha$, it is not clear which SCS has better $\frac{C}{I}.$
Theorem \ref{thm:st-order} shows that the SCS with density $\lambda_{2}(r)$
has better $\frac{C}{I}$ than $\lambda_{1}\left(r\right),$ and the
result holds irrespective of the specific values of $\alpha,\ \beta,$
and $\lambda.$ To apply Theorem \ref{thm:st-order}, we note that
$\mu_{1}^{-1}(q)=\frac{q}{\lambda},\ q\ge0,\ \mathrm{and}\ \mu_{2}^{-1}(q)=\begin{cases}
\frac{q}{\alpha} & ,\ q\leq\alpha\rho\\
\frac{q+(\beta-\alpha)\rho}{\beta} & ,\ q>\alpha\rho\end{cases}.$ Further, $a(q)=\begin{cases}
\frac{\lambda}{\alpha} & ,\ q\le\alpha\rho,\\
\frac{1+(\beta-\alpha)\rho/q}{\beta/\lambda} & ,\ q>\alpha\rho\end{cases}.$ As a result, for $q\leq\alpha\rho,$ $\frac{1}{a}\lambda_{1}(\frac{r}{a})=\begin{cases}
\alpha\geq\alpha=\lambda_{2}(r) & ,\ \frac{q}{\alpha}<r<\rho\\
\alpha\geq\beta=\lambda_{2}(r) & ,\ r>\rho\end{cases},$ and for $q>\alpha\rho,$ $\mu_{2}^{-1}(q)>\rho,$ $\frac{1}{a}\lambda_{1}(\frac{r}{a})=\frac{\beta}{1+(\beta-\alpha)\rho/q}\geq\beta=\lambda_{2}(r),\ r>\rho.$
Thus, applying Theorem \ref{thm:st-order},$\left.\frac{C}{I}\right|_{\lambda_{1}(r)}\le_{\text{st}}\left.\frac{C}{I}\right|_{\lambda_{2}(r)}$.
Similarly, if $\alpha<\beta,$ we can show $\left.\frac{C}{I}\right|_{\lambda_{1}(r)}\ge_{\text{st}}\left.\frac{C}{I}\right|_{\lambda_{2}(r)}.$

\subsection{\label{sub:comparePL}Comparison of Path-loss Models}

Here, we compare the $\frac{C}{I}$ at the MS in two \emph{homogeneous}
$l-$D SCSs with a BS density $\lambda_{0}$ and with different path-loss
models, $\frac{1}{h_{1}\left(r\right)},$ and $\frac{1}{h_{2}\left(r\right)},\ \forall\ r\ge0$.
The proofs for Corollary \ref{cor:cmp_exponents} and \ref{cor:cmp_var_exponents}
involve the following common steps. Firstly, using \cite[Lemma 2]{Madhusudhanan2010a},
reduce the homogeneous $l-$D SCS to the equivalent 1-D SCS with a
BS density function $\lambda_{i}\left(r\right)=\lambda_{0}b_{l}r^{l-1},\ i=1,\ 2.$
Next, using \cite[Theorem 5]{Madhusudhanan2010a}, the resultant 1-D
SCSs can be reduced to an equivalent 1-D SCS with path-loss exponent
$\varepsilon=1$, i.e., $\frac{1}{h_{i}\left(r\right)}=\frac{1}{r}$,
and with BS density functions $\bar{\lambda}_{i}\left(r\right)=\frac{\lambda_{i}\left(h_{i}^{-1}\left(r\right)\right)}{h_{i}^{'}\left(h_{i}^{-1}\left(r\right)\right)},$
$i=1,2$ where $h_{i}^{-1}\left(\cdot\right)$ is the inverse function
of $h_{i}\left(\cdot\right).$ Finally, we have two 1-D SCSs with
the same path-loss model $\frac{1}{h_{i}\left(r\right)}=\frac{1}{r}$
and BS density functions $\bar{\lambda}_{i}\left(r\right),\ r\ge0,\ i=1,2.$
We show that Theorem \ref{thm:st-order} applies and derive the result.
\begin{cor}
\emph{\label{cor:cmp_exponents}In a }homogeneous\emph{ $l-$D SCS,
if the path-loss follows a power-law parametrized by a path-loss exponent,
$\varepsilon$, the $\frac{C}{I}$ at the MS improves as the path-loss
exponent increases. In other words, if $h_{i}\left(r\right)=r^{\varepsilon_{i}},\ i=1,2,$
such that $\varepsilon_{1}>\varepsilon_{2}>l,$ then $\left(\frac{C}{I}\right)_{1}\ge_{\mathrm{st}}\left(\frac{C}{I}\right)_{2},$
where $\left(\frac{C}{I}\right)_{i}$ corresponds to the path-loss
model $\frac{1}{h_{i}\left(r\right)}.$}\end{cor}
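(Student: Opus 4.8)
The plan is to follow the three-step recipe described just above the corollary and then check the hypothesis of Theorem~\ref{thm:st-order} directly. First I would invoke \cite[Lemma 2]{Madhusudhanan2010a} to replace each homogeneous $l$-D SCS by its equivalent 1-D SCS with $\lambda_i(r)=\lambda_0 b_l r^{l-1}$, and then \cite[Theorem 5]{Madhusudhanan2010a} to pass to an equivalent 1-D SCS with path-loss exponent $1$ and density $\bar\lambda_i(r)=\lambda_i(h_i^{-1}(r))/h_i'(h_i^{-1}(r))$. Since each transformation leaves the $\frac{C}{I}$ distribution unchanged, it suffices to establish the stochastic order between the two reduced systems. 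With $h_i(r)=r^{\varepsilon_i}$ a short computation gives $\bar\lambda_i(r)=\frac{\lambda_0 b_l}{\varepsilon_i}\,r^{l/\varepsilon_i-1}$; the hypothesis $\varepsilon_i>l$ makes the exponent $l/\varepsilon_i-1$ negative, which is exactly what guarantees that the reduced interference sum, and hence $\frac{C}{I}$, is well defined.

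Next I would compute the cumulative densities and their inverses. Integrating gives $\mu_i(r)=\frac{\lambda_0 b_l}{l}\,r^{l/\varepsilon_i}$ and therefore $\mu_i^{-1}(q)=c^{\,\varepsilon_i/l}$, where I abbreviate $c=lq/(\lambda_0 b_l)$. Because the claim is $\left(\frac{C}{I}\right)_1\ge_{\mathrm{st}}\left(\frac{C}{I}\right)_2$, in the language of Theorem~\ref{thm:st-order} the smaller system (its $\lambda_1$) is the $\varepsilon_2$-system and the larger system (its $\lambda_2$) is the $\varepsilon_1$-system. With this identification $a=\mu_2^{-1}(q)/\mu_1^{-1}(q)=c^{\,(\varepsilon_1-\varepsilon_2)/l}$, and the condition to verify becomes $\frac{1}{a}\bar\lambda_2(r/a)\ge\bar\lambda_1(r)$ for all $r\ge\mu_2^{-1}(q)=c^{\,\varepsilon_1/l}$.

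The cleanest way to close this is to rescale the variable. Writing $r=c^{\,\varepsilon_1/l}\,t$ with $t\ge1$, the powers of $c$ cancel and the required inequality collapses to the single-variable statement $\frac{1}{\varepsilon_2}\,t^{l/\varepsilon_2}\ge\frac{1}{\varepsilon_1}\,t^{l/\varepsilon_1}$ for all $t\ge1$. At $t=1$ this reads $1/\varepsilon_2\ge1/\varepsilon_1$, which holds since $\varepsilon_1>\varepsilon_2$; and for $t\ge1$ the ratio $t^{\,l(1/\varepsilon_2-1/\varepsilon_1)}$ has a positive exponent, hence is at least $1\ge\varepsilon_2/\varepsilon_1$, so the inequality only strengthens as $t$ grows. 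Applying Theorem~\ref{thm:st-order} then yields $\left(\frac{C}{I}\right)_2\le_{\mathrm{st}}\left(\frac{C}{I}\right)_1$, which is the claim.

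I expect the main obstacle to be bookkeeping rather than any deep estimate: one must keep straight which physical system plays the role of $\lambda_1$ versus $\lambda_2$ in Theorem~\ref{thm:st-order} (an inversion of the subscripts, since the larger path-loss exponent corresponds to the larger $\frac{C}{I}$), and track the exponents consistently through both reductions and through the density-scaling map $\lambda(r)\mapsto\frac{1}{a}\lambda(r/a)$. The substitution $r=\mu_2^{-1}(q)\,t$ is what makes the verification transparent, reducing a two-parameter family of inequalities indexed by $q$ and $r$ to the single monotone power inequality above.
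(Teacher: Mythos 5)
Your proposal is correct, and after the two reduction steps (which coincide exactly with the paper's) it finishes by a genuinely different route. The paper disposes of the last step in one line: it writes $\bar\lambda_{2}(r)=\beta(r)\bar\lambda_{1}(r)$ with $\beta(r)=\frac{\varepsilon_{1}}{\varepsilon_{2}}r^{l/\varepsilon_{2}-l/\varepsilon_{1}}$ non-decreasing and invokes Corollary \ref{cor:homogenousStOrder}. You instead verify the hypothesis of Theorem \ref{thm:st-order} from scratch: you compute $\mu_{i}^{-1}(q)=c^{\varepsilon_{i}/l}$ with $c=lq/(\lambda_{0}b_{l})$, hence $a=c^{(\varepsilon_{1}-\varepsilon_{2})/l}$, and collapse the required density inequality via the substitution $r=\mu_{2}^{-1}(q)\,t$ into the single statement $\frac{\varepsilon_{1}}{\varepsilon_{2}}t^{l(1/\varepsilon_{2}-1/\varepsilon_{1})}\ge1$ for $t\ge1$; your bookkeeping (the theorem's $\lambda_{1}$ is the $\varepsilon_{2}$-system, its $\lambda_{2}$ the $\varepsilon_{1}$-system) and the cancellation of the powers of $c$ are both right. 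What the paper's route buys is brevity, by reusing machinery already built. What your route buys is more than self-containedness: it quietly repairs a gap. Corollary \ref{cor:homogenousStOrder} rests on Corollary \ref{cor:monotone-diff}, which requires $\Delta\lambda(r)=\frac{1}{a}\bar\lambda_{2}(r/a)-\bar\lambda_{1}(r)$ to be non-decreasing on all of $[0,\infty)$. Here $\Delta\lambda$ is a difference of two power laws with negative exponents $l/\varepsilon_{2}-1$ and $l/\varepsilon_{1}-1$ (the reduced densities have ``fractional dimension'' $l/\varepsilon_{i}<1$, already outside the literal scope of Corollary \ref{cor:homogenousStOrder}), and examining the sign of its derivative shows that $\Delta\lambda$ rises from $-\infty$ near the origin, peaks, and then decays to $0$, so it is not globally non-decreasing and Corollary \ref{cor:monotone-diff} does not literally apply. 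Your direct computation establishes exactly what Theorem \ref{thm:st-order} actually needs, namely $\Delta\lambda(r)\ge0$ for all $r\ge\mu_{2}^{-1}(q)$, without the monotonicity detour, and is therefore the more defensible proof of this corollary.
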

\begin{proof}
At the end of Step 2, the equivalent 1-D SCSs with a path-loss model
$\frac{1}{r}$ have the BS density functions $\bar{\lambda}_{i}\left(r\right)=\frac{\lambda_{0}b_{l}}{\varepsilon_{i}}r^{\frac{l}{\varepsilon_{i}}-1}.$
Further, $\bar{\lambda}_{2}\left(r\right)=\beta\left(r\right)\bar{\lambda}_{1}\left(r\right),$
where $\beta\left(r\right)=\frac{\varepsilon_{1}}{\varepsilon_{2}}r^{\frac{l}{\varepsilon_{2}}-\frac{l}{\varepsilon_{1}}},\ r\ge0$
is a non-decreasing function. Hence, Corollary \ref{cor:homogenousStOrder}
applies and $\left.\frac{C}{I}\right|_{\bar{\lambda}_{1}\left(r\right)}\ge_{\mathrm{st}}\left.\frac{C}{I}\right|_{\bar{\lambda}_{2}\left(r\right)}.$
\end{proof}

Hence, a simple proof that does not require solving the distribution
of $\frac{C}{I}$ gives the expected result that a channel with a
greater path-loss exponent has a better $\frac{C}{I}$. The following
corollary establishes a similar result between two popularly used
path-loss models \cite{Rappaport1996}.
\begin{cor}
\emph{\label{cor:cmp_var_exponents}In a }homogeneous\emph{ $l-$D
SCS with a BS density $\lambda_{0},$ the received signal of a MS
located at the origin satisfies $\left(\frac{C}{I}\right)_{1}\le_{\mathrm{st}}\left(\frac{C}{I}\right)_{2},$
where $\left(\frac{C}{I}\right)_{1}$ corresponds to path-loss $\frac{1}{h_{1}\left(r\right)}$
with $h_{1}\left(r\right)=r^{\varepsilon_{1}},\ r\ge0$ and $\left(\frac{C}{I}\right)_{2}$
corresponds to the path-loss $\frac{1}{h_{2}\left(r\right)}$ with
$h_{2}\left(r\right)=\begin{cases}
r^{\varepsilon_{1}} & ,\ r\le1\\
r^{\varepsilon_{2}} & ,\ r>1\end{cases},$ where $\varepsilon_{2}>\varepsilon_{1}>l.$ The opposite conclusion
holds when $\varepsilon_{1}>\varepsilon_{2}>l.$}\end{cor}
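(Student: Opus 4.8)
The plan is to follow the two reduction steps outlined at the start of Section~\ref{sub:comparePL} and then invoke Theorem~\ref{thm:st-order} directly. After Step~1 both SCSs share the equivalent $1$-D density $\lambda_0 b_l r^{l-1}$, and after Step~2 (via \cite[Theorem 5]{Madhusudhanan2010a}) they reduce to $1$-D SCSs with path loss $1/r$ and densities $\bar{\lambda}_i(r)=\lambda_i(h_i^{-1}(r))/h_i'(h_i^{-1}(r))$. For $h_1(r)=r^{\varepsilon_1}$ this gives $\bar{\lambda}_1(r)=\frac{\lambda_0 b_l}{\varepsilon_1}r^{l/\varepsilon_1-1}$, exactly as in Corollary~\ref{cor:cmp_exponents}. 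The first key observation is that since $h_2$ is continuous at $r=1$ with $h_2(1)=1$, its inverse is $y^{1/\varepsilon_1}$ for $y\le1$ and $y^{1/\varepsilon_2}$ for $y>1$, so $\bar{\lambda}_2$ coincides with $\bar{\lambda}_1$ on $[0,1]$ and equals $\frac{\lambda_0 b_l}{\varepsilon_2}r^{l/\varepsilon_2-1}$ on $(1,\infty)$. Thus the two reduced densities differ only for $r>1$, and I will apply Theorem~\ref{thm:st-order} rather than Corollary~\ref{cor:homogenousStOrder}, since the density ratio $\beta(r)=\bar{\lambda}_2(r)/\bar{\lambda}_1(r)$ now has a break point at $r=1$.

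Next I would compute the cumulative densities and their inverses, which is where the continuity of $h_2$ pays off. Writing $c=\lambda_0 b_l/l$, integration gives $\mu_1(r)=c\,r^{l/\varepsilon_1}$, while $\mu_2(r)=c\,r^{l/\varepsilon_1}$ for $r\le1$ and---after the constant contributed by the first piece cancels---$\mu_2(r)=c\,r^{l/\varepsilon_2}$ for $r>1$, again a clean power law. Inverting, $\mu_1^{-1}(q)=(q/c)^{\varepsilon_1/l}$ for all $q$, whereas $\mu_2^{-1}(q)=(q/c)^{\varepsilon_1/l}$ for $q\le c$ and $(q/c)^{\varepsilon_2/l}$ for $q>c$. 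Hence the scaling factor of Theorem~\ref{thm:st-order} is $a(q)=\mu_2^{-1}(q)/\mu_1^{-1}(q)=1$ for $q\le c$ and $a(q)=(q/c)^{(\varepsilon_2-\varepsilon_1)/l}>1$ for $q>c$.

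It then remains to verify $\frac{1}{a}\bar{\lambda}_1(r/a)\ge\bar{\lambda}_2(r)$ for every $r\ge\mu_2^{-1}(q)$. Since $\frac{1}{a}\bar{\lambda}_1(r/a)=\frac{\lambda_0 b_l}{\varepsilon_1}a^{-l/\varepsilon_1}r^{l/\varepsilon_1-1}$ and, for $r>1$, $\bar{\lambda}_2(r)=\frac{\lambda_0 b_l}{\varepsilon_2}r^{l/\varepsilon_2-1}$, taking logs reduces the inequality to $\frac{\varepsilon_2}{\varepsilon_1}a^{-l/\varepsilon_1}\ge r^{-l(1/\varepsilon_1-1/\varepsilon_2)}$. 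For $q\le c$ we have $a=1$ and $\mu_2^{-1}(q)\le1$, so the densities agree on $[\mu_2^{-1}(q),1]$ and on $(1,\infty)$ the inequality becomes $\varepsilon_2/\varepsilon_1\ge r^{-l(1/\varepsilon_1-1/\varepsilon_2)}$, which holds because the left side exceeds $1$ while the right side is below $1$ for $r>1$. For $q>c$, substituting $a=(q/c)^{(\varepsilon_2-\varepsilon_1)/l}$ and evaluating at the left endpoint $r=\mu_2^{-1}(q)=(q/c)^{\varepsilon_2/l}$ makes both sides equal $(q/c)^{-(\varepsilon_2-\varepsilon_1)/\varepsilon_1}$ except for the prefactor $\varepsilon_2/\varepsilon_1\ge1$ on the left, so the inequality holds at the endpoint.

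The step I expect to require the most care is closing this last case for all $r\ge\mu_2^{-1}(q)$, not merely at the endpoint. Here the structure is favorable: the left-hand side $\frac{\varepsilon_2}{\varepsilon_1}a^{-l/\varepsilon_1}$ is constant in $r$, whereas the right-hand side $r^{-l(1/\varepsilon_1-1/\varepsilon_2)}$ is strictly decreasing in $r$, its exponent being negative since $\varepsilon_2>\varepsilon_1$; therefore, once the inequality holds at $r=\mu_2^{-1}(q)$ it persists for every larger $r$. Invoking Theorem~\ref{thm:st-order} then yields $\left(\frac{C}{I}\right)_1\le_{\mathrm{st}}\left(\frac{C}{I}\right)_2$. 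The opposite conclusion for $\varepsilon_1>\varepsilon_2>l$ follows by the identical argument with the two indices interchanged, which flips each monotonicity and reverses the stochastic order.
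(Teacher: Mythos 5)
Your proof is correct. The reduced densities, the cumulative functions $\mu_1(r)=c\,r^{l/\varepsilon_1}$ and $\mu_2(r)=c\,r^{l/\varepsilon_1}$ for $r\le 1$, $c\,r^{l/\varepsilon_2}$ for $r>1$ (with $c=\lambda_0b_l/l$), the inverses, the scale factor $a(q)$, the endpoint evaluation, and the closing observation that the left side of the required inequality is constant in $r$ while the right side is decreasing, all check out, and together they verify the hypothesis of Theorem \ref{thm:st-order} for every $q>0$.

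Your route, however, is genuinely different from the paper's, and the difference is worth recording. After the same two reduction steps, the paper simply writes $\bar{\lambda}_2(r)=\beta(r)\bar{\lambda}_1(r)$ with $\beta(r)=1$ for $r\le1$ and $\beta(r)=\frac{\varepsilon_1}{\varepsilon_2}r^{l/\varepsilon_2-l/\varepsilon_1}$ for $r>1$, notes that $\beta$ is non-increasing, and invokes Corollary \ref{cor:homogenousStOrder}: a one-line finish. You instead verify Theorem \ref{thm:st-order} directly by explicit computation. This is longer, but it buys rigor. Corollary \ref{cor:homogenousStOrder} is stated for the densities $\lambda_0b_lr^{l-1}$, and its proof rests on Corollary \ref{cor:monotone-diff}, which requires $\Delta\lambda(r)=\frac{1}{a}\bar{\lambda}_1\left(\frac{r}{a}\right)-\bar{\lambda}_2(r)$ to be non-decreasing on all of $[0,\infty)$. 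Here the exponents $l/\varepsilon_i-1$ are negative (because $\varepsilon_i>l$), and for $a>1$ the difference $\Delta\lambda$ is positive for large $r$ and decays to $0$ as $r\to\infty$, hence is eventually decreasing: the monotone-difference hypothesis fails in exactly this application. What does hold, and is all that Theorem \ref{thm:st-order} needs, is the non-negativity of $\Delta\lambda$ on $[\mu_2^{-1}(q),\infty)$, which your endpoint-plus-monotonicity argument establishes. So your computation is not merely an alternative; read literally, the paper's chain of corollaries does not apply here, and your direct verification is the solid way to close the argument. (Incidentally, your stated reason for bypassing Corollary \ref{cor:homogenousStOrder}, namely the break point of $\beta$ at $r=1$, is not itself the obstruction, since a non-increasing $\beta$ may be discontinuous; the real issue is the failed monotonicity above.) One small gloss remains at the end: in the reversed case $\varepsilon_1>\varepsilon_2>l$, it is the piecewise density that gets scaled by $\frac{1}{a}$, so the argument is parallel rather than literally index-swapped, and one needs the easy extra check that the scaled density's break point $r=a$ lies below $\mu_1^{-1}(q)$ so that only the outer power law matters there; this check succeeds, and your brevity matches the paper's own treatment of that case.
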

\begin{proof}
At the end Step 2, $\bar{\lambda}_{1}\left(r\right)=\frac{\lambda_{0}b_{l}}{\varepsilon_{1}}r^{\frac{l}{\varepsilon_{1}}-1},\ r\ge0,$
and $\bar{\lambda}_{2}\left(r\right)$ satisfies the equation $\bar{\lambda}_{2}\left(r\right)\beta\left(r\right)=\bar{\lambda}_{1}\left(r\right),$
where $\beta\left(r\right)=\begin{cases}
1 & ,\ r\le1\\
\frac{\varepsilon_{1}}{\varepsilon_{2}}r^{\frac{l}{\varepsilon_{2}}-\frac{l}{\varepsilon_{1}}} & ,\ r>1\end{cases}.$ Since $\varepsilon_{2}>\varepsilon_{1}>l,$ $\beta\left(r\right)$
is a non increasing function. As a result, Corollary \ref{cor:homogenousStOrder}
holds and hence $\left.\frac{C}{I}\right|_{\bar{\lambda}_{1}\left(r\right)}\le_{\mathrm{st}}\left.\frac{C}{I}\right|_{\bar{\lambda}_{2}\left(r\right)}$.
Thus, the system with the path-loss model $\frac{1}{h_{2}\left(r\right)}$
has a better signal quality compared to that of $\frac{1}{h_{1}\left(r\right)}.$
Now, when $\varepsilon_{1}>\varepsilon_{2}>l,$ $\beta\left(r\right)$
is a non decreasing function and $\left.\frac{C}{I}\right|_{\bar{\lambda}_{1}\left(r\right)}\ge_{\mathrm{st}}\left.\frac{C}{I}\right|_{\bar{\lambda}_{2}\left(r\right)}.$
\end{proof}

\subsection{\label{sub:Generalization-to-Networks}Shadow Fading and Random Transmission
Powers}

In all the results until now, the shadow fading factors and the transmission
powers for all the BSs were constant. Here, we generalize to the case
when they are random variables, i.i.d. across BSs. The transmission
power and the shadow fading factor of the same BS could be dependent.
The following result reduces the 1-D SCS with random shadow fading
factors and random transmission powers to a 1-D SCS where both are
1.
\begin{thm}
\label{cor:shadowing}\emph{For the $\frac{C}{I}$ analysis in a 1-D
SCS with BS density function $\lambda\left(r\right),$ if the random
shadow fading factors $\left\{ \Psi_{i}\right\} _{i=1}^{\infty}$
and transmission powers $\left\{ K_{i}\right\} _{i=1}^{\infty}$ are
i.i.d. across all the BSs, the SCS is equivalent to another 1-D SCS
with BS density function $\bar{\lambda}\left(r\right)=\mathbb{E}_{\Psi,K}\left[\left(\Psi K\right)^{\frac{1}{\varepsilon}}\lambda\left(r\left(\Psi K\right)^{\frac{1}{\varepsilon}}\right)\right],$
where $\mathbb{E}$ is the expectation operator w.r.t. $\Psi$ and
$K,$ which has the same distribution as $\Psi_{i}$ and $K_{i},\ \forall\ i,$
respectively. This holds as long as the expectation converges.}\end{thm}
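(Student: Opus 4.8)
The plan is to fold the random transmission power and shadow fading of each BS into an \emph{effective distance} and then invoke the marking and mapping theorems for Poisson processes. Writing the received power from the $i$-th BS as $K_i\Psi_i R_i^{-\varepsilon}=\tilde R_i^{-\varepsilon}$ with $\tilde R_i\triangleq R_i(\Psi_i K_i)^{-1/\varepsilon}$, I would observe that $\frac{C}{I}$ is a functional only of the collection of received powers: the serving BS is the one with the largest received power, equivalently the smallest $\tilde R_i$, and the interference is the sum over the rest. Hence $\frac{C}{I}$ depends on the BSs only through the point set $\{\tilde R_i\}_{i=1}^{\infty}$, and in fact $\frac{C}{I}=\tilde R_{(1)}^{-\varepsilon}\big/\sum_{k\ge 2}\tilde R_{(k)}^{-\varepsilon}$, where $\tilde R_{(k)}$ are the ordered effective distances. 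This is exactly the form of $\frac{C}{I}$ for a pure path-loss ($\Psi=K=1$) 1-D SCS whose BS locations are $\{\tilde R_i\}$, so it suffices to show that $\{\tilde R_i\}$ is a Poisson process with the claimed intensity $\bar\lambda$.

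The second step is to identify the law of $\{\tilde R_i\}$. The original locations $\{R_i\}$ form a Poisson process of intensity $\lambda(r)$ on $[0,\infty)$, and each point carries an independent mark $(\Psi_i,K_i)$ drawn from the common law of $(\Psi,K)$; by the Marking Theorem \cite{Kingman1993}, $\{(R_i,\Psi_i,K_i)\}$ is a Poisson process on $[0,\infty)\times\mathbb{R}_{+}^{2}$ with intensity $\lambda(r)\,dF_{\Psi,K}$. The measurable map $(r,\psi,k)\mapsto r(\psi k)^{-1/\varepsilon}$ pushes this forward onto $\{\tilde R_i\}$, so by the Mapping Theorem \cite{Kingman1993} the collection $\{\tilde R_i\}$ is again a Poisson process, and only its intensity remains to be computed.

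For the intensity I would fix the mark, perform the deterministic change of variables $r=\rho(\psi k)^{1/\varepsilon}$ (so that $dr=(\psi k)^{1/\varepsilon}\,d\rho$), under which $\lambda(r)\,dr$ becomes $(\psi k)^{1/\varepsilon}\lambda\!\big(\rho(\psi k)^{1/\varepsilon}\big)\,d\rho$, and then average over the mark distribution to obtain
\begin{equation}
\bar\lambda(\rho)=\mathbb{E}_{\Psi,K}\!\left[(\Psi K)^{1/\varepsilon}\lambda\big(\rho(\Psi K)^{1/\varepsilon}\big)\right],\nonumber
\end{equation}
which is the claimed density. Equivalently, one can bypass the named theorems and verify the result directly by computing the void probability (or Laplace functional) of $\{\tilde R_i\}$ and checking that it coincides with that of a Poisson process of intensity $\bar\lambda$; this also supplies the Poisson property and the intensity in one stroke.

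I expect the main obstacle to be not the algebra but the well-definedness of the transformed process. The scaling $r\mapsto r(\Psi K)^{-1/\varepsilon}$ pulls points toward the origin whenever $\Psi K$ is large, so one must ensure that $\{\tilde R_i\}$ is locally finite, i.e. that $\bar\lambda$ is integrable on compact sets and no accumulation occurs at $0$; otherwise the smallest-$\tilde R$ (serving) BS need not exist and the identification of the two $\frac{C}{I}$ distributions would fail. This is precisely what the hypothesis \emph{``as long as the expectation converges''} guarantees: convergence of $\mathbb{E}_{\Psi,K}[(\Psi K)^{1/\varepsilon}\lambda(\rho(\Psi K)^{1/\varepsilon})]$ yields $\int_{0}^{\rho}\bar\lambda(s)\,ds<\infty$ for each $\rho$, so the effective-distance process is a bona fide locally finite Poisson process and the equivalence $\left.\frac{C}{I}\right|_{\lambda,\Psi,K}=_{\mathrm{st}}\left.\frac{C}{I}\right|_{\bar\lambda}$ holds.
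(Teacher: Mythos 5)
Your proposal is correct, and its opening move coincides with the paper's: fold each BS's power and fading into an effective distance $\tilde{R}_i = R_i (K_i \Psi_i)^{-1/\varepsilon}$, note that the serving BS is the one with the smallest $\tilde{R}_i$, and hence that $\frac{C}{I}$ is exactly the unit-power, unit-fading functional of the point set $\{\tilde{R}_i\}_{i=1}^{\infty}$. The difference lies in how the law of $\{\tilde{R}_i\}$ is identified. The paper at this point simply invokes \cite[Theorem 4]{Madhusudhanan2010a} as a black box to conclude that $\{\tilde{R}_i\}$ is a Poisson process with density $\bar{\lambda}(r)$; your proof opens that box, using the Marking Theorem to view $\{(R_i,\Psi_i,K_i)\}$ as a Poisson process on the product space with intensity $\lambda(r)\,dF_{\Psi,K}$, the Mapping Theorem for the push-forward under $(r,\psi,k)\mapsto r(\psi k)^{-1/\varepsilon}$, and the change of variables $r=\rho(\psi k)^{1/\varepsilon}$ to obtain the intensity $\bar{\lambda}(\rho)=\mathbb{E}_{\Psi,K}\left[(\Psi K)^{1/\varepsilon}\lambda\left(\rho(\Psi K)^{1/\varepsilon}\right)\right]$, which matches the claim. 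What this buys is a self-contained argument within this paper and, more importantly, an explanation of the hypothesis ``as long as the expectation converges,'' which you correctly tie to local finiteness of the mapped process so that a nearest (serving) point exists --- a role the paper leaves entirely implicit. One refinement: pointwise finiteness of $\bar{\lambda}(\rho)$ for each $\rho$ does not by itself yield $\int_0^{\rho}\bar{\lambda}(s)\,ds<\infty$; by Tonelli that integral equals $\mathbb{E}_{\Psi,K}\left[\mu\left(\rho(\Psi K)^{1/\varepsilon}\right)\right]$ with $\mu(r)=\int_0^r\lambda(s)\,ds$, and it is finiteness of this cumulative quantity that rules out accumulation of points near the origin. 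This is a matter of how one reads the theorem's convergence hypothesis rather than a flaw in your argument; with that reading, your proof is complete.
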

\begin{proof}
For the random shadow fading and transmission powers case, the $\frac{C}{I}$
defined in Section \ref{sec:modelreview} can be written as $\frac{C}{I}=\frac{\left(R_{j}\left(K_{j}\Psi_{j}\right)^{-\frac{1}{\varepsilon}}\right)^{-\varepsilon}}{\underset{i,\ i\ne j}{\sum}\left(R_{i}\left(K_{i}\Psi_{i}\right)^{-\frac{1}{\varepsilon}}\right)^{-\varepsilon}},$
where the index $j$ corresponds to the BS with the strongest received
signal power at the MS. The above expression for the $\frac{C}{I}$
corresponds to a 1-D SCS with distances from the MS given by $\left\{ R_{i}\left(K_{i}\Psi_{i}\right)^{-\frac{1}{\varepsilon}}\right\} _{i=1}^{\infty},$
unity shadow fading factors and unity transmission powers at each
BS. Now, \cite[Theorem 4]{Madhusudhanan2010a} applies, and the new
1-D SCS has a BS density function $\bar{\lambda}\left(r\right).$
\end{proof}
Thus, if there are random shadow fading factors and/or transmission
powers at each BS, one can first apply Theorem \ref{cor:shadowing}
to obtain the equivalent 1-D SCSs with constant shadowing and transmission
powers, and then apply Theorem \ref{thm:st-order} for the comparison
of $\frac{C}{I}$'s. The following corollary shows a scenario where
$\frac{C}{I}$ distribution is unaffected by the distributions of
random shadow fading factors and transmission powers.
\begin{cor}
\label{cor:shadowingInvariance}\emph{In a }homogeneous\emph{ $l-$D
SCS with a BS density $\lambda_{0}$, the $\frac{C}{I}$ distribution
at the MS does not depend on the distributions of the random shadow
fading factors $\left\{ \Psi_{i}\right\} _{i=1}^{\infty}$ and transmission
powers $\left\{ K_{i}\right\} _{i=1}^{\infty},$ if they are i.i.d.
across BSs and }\textup{$\left|\mathbb{E}_{\Psi,K}\left[\left(\Psi K\right)^{\frac{l}{\varepsilon}}\right]\right|<\infty$}\emph{.}\end{cor}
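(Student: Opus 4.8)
The plan is to combine the reduction in Theorem~\ref{cor:shadowing} with the scale invariance established in Corollary~\ref{cor:homogeousCTOIinvariance}. First I would recall that a homogeneous $l$-D SCS with density $\lambda_{0}$ is equivalent, for the $\frac{C}{I}$ analysis, to a 1-D SCS with density $\lambda(r)=\lambda_{0}b_{l}r^{l-1}$. The random shadow fading and transmission powers are then absorbed by Theorem~\ref{cor:shadowing}, which produces the equivalent 1-D SCS (with unit fading and unit power) whose density is $\bar{\lambda}(r)=\mathbb{E}_{\Psi,K}\big[(\Psi K)^{1/\varepsilon}\lambda\big(r(\Psi K)^{1/\varepsilon}\big)\big]$.

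The key computation is to substitute $\lambda(r)=\lambda_{0}b_{l}r^{l-1}$ into this expression. Since the power-law density is homogeneous of degree $l-1$, the factor $(\Psi K)^{1/\varepsilon}$ pulled out of the argument combines with the explicit prefactor $(\Psi K)^{1/\varepsilon}$ to give $\bar{\lambda}(r)=\lambda_{0}b_{l}r^{l-1}\,\mathbb{E}_{\Psi,K}\big[(\Psi K)^{l/\varepsilon}\big]$. Writing $c\triangleq\mathbb{E}_{\Psi,K}\big[(\Psi K)^{l/\varepsilon}\big]$, the hypothesis $|c|<\infty$ guarantees that the expectation defining $\bar{\lambda}$ converges, so Theorem~\ref{cor:shadowing} applies, and since $\Psi,K>0$ we have $c>0$. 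Hence $\bar{\lambda}(r)=c\lambda_{0}b_{l}r^{l-1}$, which is precisely the equivalent 1-D density of a homogeneous $l$-D SCS with density $c\lambda_{0}$.

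Finally I would invoke Corollary~\ref{cor:homogeousCTOIinvariance}: in a homogeneous $l$-D SCS the $\frac{C}{I}$ distribution does not depend on the BS density, so the system with density $c\lambda_{0}$ has the same $\frac{C}{I}$ distribution as the one with density $\lambda_{0}$ and unit fading and power. Because all dependence on the laws of $\Psi$ and $K$ has collapsed into the single positive constant $c$, which the scale invariance then discards, the $\frac{C}{I}$ distribution is independent of the distributions of $\Psi$ and $K$, as claimed.

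I do not anticipate a serious obstacle, as the argument reduces to a one-line density computation followed by two earlier results. The only point requiring care is the finiteness hypothesis, which plays a dual role: it ensures that the expectation defining $\bar{\lambda}$ converges (so that Theorem~\ref{cor:shadowing} is applicable) and that $c$ is a genuine finite positive scaling constant rather than $0$ or $\infty$, either of which would break the reduction to a homogeneous SCS and hence the appeal to Corollary~\ref{cor:homogeousCTOIinvariance}.
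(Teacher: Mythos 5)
Your proposal is correct and follows essentially the same route as the paper: reduce to the equivalent 1-D power-law density, apply Theorem~\ref{cor:shadowing} to compute $\bar{\lambda}(r)=\mathbb{E}_{\Psi,K}\bigl[(\Psi K)^{l/\varepsilon}\bigr]\lambda_{0}b_{l}r^{l-1}$, and then discard the constant factor by scale invariance. The only cosmetic difference is that you invoke Corollary~\ref{cor:homogeousCTOIinvariance} in the last step, while the paper rewrites $\bar{\lambda}(r)$ as $\frac{1}{\alpha}\lambda\left(\frac{r}{\alpha}\right)$ with $\alpha=\left(\mathbb{E}_{\Psi,K}\bigl[(\Psi K)^{l/\varepsilon}\bigr]\right)^{-1/l}$ and cites Corollary~\ref{cor:SIR-scaling} directly — but the former is an immediate consequence of the latter, so the arguments coincide.
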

\begin{proof}
Firstly, recall that the homogeneous $l-$D SCS is equivalent to a
1-D SCS with a BS density function $\lambda\left(r\right)=\lambda_{0}b_{l}r^{l-1},\ r\ge0.$
We have $\left(\frac{C}{I}\right)_{\text{rand}}\overset{\left(a\right)}{=}_{\mathrm{st}}\left.\frac{C}{I}\right|_{\bar{\lambda}\left(r\right)}\overset{\left(b\right)}{=}_{\mathrm{st}}\left.\frac{C}{I}\right|_{\frac{1}{\alpha}\lambda\left(\frac{r}{\alpha}\right)}\overset{\left(c\right)}{=}_{\mathrm{st}}\left.\frac{C}{I}\right|_{\lambda\left(r\right)}$,
where $\left(a\right)$ is obtained by applying Theorem \ref{cor:shadowing}
to the 1-D SCS with the BS density function $\lambda\left(r\right),$
to obtain the equivalent 1-D SCS with constant shadow fading factors
and transmission powers and with a BS density $\bar{\lambda}\left(r\right)=\mathbb{E}_{\Psi,K}\left[\left(\Psi K\right)^{\frac{1}{\varepsilon}}\lambda\left(r\left(\Psi K\right)^{\frac{1}{\varepsilon}}\right)\right]=\mathbb{E}_{\Psi,K}\left[\left(\Psi K\right)^{\frac{l}{\varepsilon}}\right]\lambda\left(r\right)$;
$\left(b\right)$ is obtained by rewriting $\bar{\lambda}\left(r\right)$
as $\frac{1}{\alpha}\lambda\left(\frac{r}{\alpha}\right)$ where $\alpha=\left(\mathbb{E}_{\Psi,K}\left[\left(\Psi K\right)^{\frac{l}{\varepsilon}}\right]\right)^{-\frac{1}{l}}$;
$\left(c\right)$ is obtained by applying Corollary \ref{cor:SIR-scaling}.
Thus, the shadow fading and random transmission powers have no effect
on the $\frac{C}{I}$ distribution.
\end{proof}
This result was already proved in \cite[Remark 4(a)]{Madhusudhanan2010a}.
But here, we have shown an elegant alternative proof that is based
only on the concepts of stochastic ordering. Finally, as a consequence
of Corollary \ref{cor:shadowingInvariance}, the results in Section
\ref{sub:comparePL} also hold for cases with random shadow fading
factors and transmission powers that are i.i.d. across BSs.

\section{\label{sec:Conclusions}Conclusions}

This paper is an extension to our previous work in characterizing
the $\frac{C}{I}$ of a SCS in \cite{Madhusudhanan2010a}. The study
of the $\frac{C}{I}$ at the MS in a cellular system with BSs distributed
according to a \emph{non-homogeneous} Poisson process is difficult
because the distribution of the $\frac{C}{I}$ is not in closed form
\cite{Madhusudhanan2010a} and it is difficult to form an intuition
about such networks. As a result, most of the $\frac{C}{I}$ analysis
are restricted to the \emph{homogeneous} $l-D$ SCSs. Here, we have
developed a stochastic ordering based tool to analyze the $\frac{C}{I}$
at the MS in such non-homogeneous Poisson processes. Due to Theorem
\ref{thm:st-order}, for certain BS densities, we show that, by just
comparing the BS density functions of the SCSs, we can make strong
inferences such as, a MS in a given SCS achieves a $\frac{C}{I}$
that is at least as good as that achieved in another SCS without having
to solve for the $\frac{C}{I}$ distributions. Moreover, as a consequence
of Theorem \ref{thm:st-order}, elegant proofs are derived to show
that (1) a MS in a \emph{homogeneous} $l$-D SCS sees decreasing signal
quality as dimension $l$ increases; (2) the $\frac{C}{I}$ at the
MS improves as the path-loss exponent of the channel increases; and
(3) as far as $\frac{C}{I}$ is concerned, a SCS with random shadow
fading factors and random transmission powers, which are i.i.d. across
BSs, is equivalent to a SCS with constant shadow fading factors and
transmission powers and with a modified BS density.

\appendix

\subsection{Proof of Theorem \ref{thm:st-order}\label{sec:st-order-proof}}

Consider the 1-D SCS specified by the set $\left\{ \lambda\left(r\right),\mu\left(r\right),\mu^{-1}\left(q\right)\right\} $,
as in Definition \ref{def:DensityFunction}. The following remark
relates $\frac{C}{I}$ to the cumulative BS density.

If $R_{1}$ denotes the distance between the serving BS and MS in
the 1-D SCS, \[
\mathbb{P}\left(\left\{ \frac{C}{I}>y\right\} \right)\overset{\left(a\right)}{=}\int_{r_{1}=0}^{\infty}\mathbb{P}\left(\left.\frac{C}{I}>y\right|R_{1}=r\right)f_{R_{1}}(r)dr\overset{\left(b\right)}{=}\int_{q=0}^{\infty}\mathbb{P}\left(\left.\frac{C}{I}>y\right|Q=q\right)f_{Q}(q)dq,\]
where $Q\triangleq\mu(R_{1}),$ and $Q$ is an exponential random
variable with mean 1.

Equation $\left(a\right)$ is obtained by conditioning w.r.t. $R_{1}$.
Equation $\left(b\right)$ is obtained by expressing $\left(a\right)$
in terms of $Q$, where the p.d.f. of $R_{1}$ at $R_{1}=\mu^{-1}\left(q\right)$
is $\begin{array}{c}
\left.f_{R_{1}}(r)dr\right|_{r=\mu^{-1}(q)}\end{array}=\left.e^{-\int_{0}^{r}\lambda(s)ds}\lambda(r)dr\right|_{r=\mu^{-1}(q)}=e^{-q}dq=f_{Q}(q)dq,$ which does not depend on $\lambda(r)$.

To show that the BS density $\lambda_{1}(r)$ gives a worse $\frac{C}{I}$
than $\lambda_{2}(r)$ does, one needs to show that $\left.\frac{C}{I}\right|_{R_{1}=\mu_{1}^{-1}(q),\ \lambda_{1}(r),\ r\ge\mu_{1}^{-1}(q)}\le_{\text{st}}\left.\frac{C}{I}\right|_{R_{1}=\mu_{2}^{-1}(q),\ \lambda_{2}(r),\ r\ge\mu_{2}^{-1}(q)}$
for all $q>0$, where the condition of the domain of the BS density
is because the locations of interfering BSs only depend on the BS
density in that domain. Next, define $a=\frac{\mu_{2}^{-1}(q)}{\mu_{1}^{-1}(q)}$.
By Corollary \ref{cor:SIR-scaling},\begin{eqnarray*}
\left.\frac{C}{I}\right|_{R_{1}=\mu_{1}^{-1}(q),\ \lambda_{1}(r),\ r\ge\mu_{1}^{-1}(q)} & = & \left.\frac{\left(\mu_{1}^{-1}(q)\right)^{-\varepsilon}}{\sum_{k=2}^{\infty}R_{k}^{-\varepsilon}}\right|_{R_{2}\ge\mu_{1}^{-1}(q),\ \lambda_{1}(r),\ r\ge\mu_{1}^{-1}(q)}\end{eqnarray*}
\begin{eqnarray*}
=\left.\frac{\left(a\mu_{1}^{-1}(q)\right)^{-\varepsilon}}{\sum_{k=2}^{\infty}(aR_{k})^{-\varepsilon}}\right|_{R_{2}\ge\mu_{1}^{-1}(q),\ \lambda_{1}(r),\ r\ge\mu_{1}^{-1}(q)} & =_{\text{st}} & \left.\frac{\left(\mu_{2}^{-1}(q)\right)^{-\varepsilon}}{\sum_{k=2}^{\infty}\left(R_{k}^{'}\right)^{-\varepsilon}}\right|_{R_{2}^{'}\ge\mu_{2}^{-1}(q),\ \frac{1}{a}\lambda_{1}(\frac{r}{a}),\ r\ge\mu_{2}^{-1}(q)},\end{eqnarray*}
 where $R_{k}^{'}$'s are the ordered BS locations of the SCS with
BS density $\frac{1}{a}\lambda\left(\frac{r}{a}\right)$. The equation
means that the conditional $\frac{C}{I}$ of the SCS with a BS density
$\lambda_{1}(r)$ is equivalent to an SCS with BS density $\frac{1}{a}\lambda_{1}(\frac{r}{a})$
with the same location of the serving BS as the SCS with BS density
$\lambda_{2}(r)$.

With the locations of the serving BSs equal and fixed, $\frac{C}{I}$
is a decreasing function of the interference. Theorem 1.A.3.(a) of
\cite{ShakedShanthikumar06} says that decreasing functions reverse
stochastic order. Therefore, one only needs to show that the interferences
satisfy \begin{eqnarray}
\sum_{k=2}^{\infty}R_{k}^{-\varepsilon}|_{R_{2}\ge\mu_{2}^{-1}(q),\ \frac{1}{a}\lambda_{1}(\frac{r}{a}),\ r\ge\mu_{2}^{-1}(q)} & \ge_{\text{st}} & \sum_{k=2}^{\infty}R_{k}^{-\varepsilon}|_{R_{2}\ge\mu_{2}^{-1}(q),\ \lambda_{2}(r),\ r\ge\mu_{2}^{-1}(q)}.\label{eq:interference-order}\end{eqnarray}
 As shown in \cite[Appendix B]{Madhusudhanan2010a}, the total interference
power can be expressed as $\sum_{k=2}^{\infty}R_{k}^{-\varepsilon}=\lim_{r_{B}\rightarrow\infty}\lim_{N\rightarrow\infty}\sum_{i=2}^{N}X_{i}$,
where $X_{i}$ is a Bernoulli random variable defined by \[
\mathbb{P}\left(\left\{ \left.X_{i}=0\right|R_{1}=r_{1}\right\} \right)=p_{i},\ \mathbb{P}\left(\left\{ \left.X_{i}=r_{i}^{-\varepsilon}+o\left(\Delta r\right)\right|R_{1}=r_{1}\right\} \right)=1-p_{i},\]
 $p_{i}=\lambda(r_{i})\Delta r+o(\Delta r)$, $r_{i}=r_{1}+(i-1)\Delta r$,
$\Delta r=\frac{r_{B}-r_{1}}{N}$, and $r_{1}=\mu_{2}^{-1}(q)$. Now,
since the condition $\frac{1}{a}\lambda_{1}(\frac{r}{a})\ge\lambda_{2}(r)$
holds for all $r\ge\mu_{2}^{-1}(q)$, we have $X_{i}|_{\frac{1}{a}\lambda_{1}(\frac{r}{a})}\ge_{\text{st}}X_{i}|_{\lambda_{2}(r)},\;\forall i\ge2.$
Since summation preserves the stochastic order \cite[Theorem 1.A.3.(b)]{ShakedShanthikumar06},
(\ref{eq:interference-order}) is proved, completing the proof of
Theorem \ref{thm:st-order}.

\bibliographystyle{IEEEtran}
\bibliography{stochasticOrdering}

\end{document}